\theoremstyle{definition}     
\newtheorem{thm}{Theorem}                
\newtheorem{prop}{Proposition}      
\newtheorem{lemma}{Lemma}      
\theoremstyle{definition}                           
\theoremstyle{remark}                             
\newtheorem*{rmk}{Remark}              
\newcommand{\be}{\begin{eqnarray}}
\newcommand{\ee}{\end{eqnarray}}
\newcommand{\R}{\mathbb{R}}  
\newcommand{\C}{\mathbb{C}} 
\newcommand{\N}{\mathbb{N}} 
\newcommand{\wt}[1]{\widetilde{#1}}
\def\J{{\widetilde{J}}}
\newcommand{\ber}[2]{\text{ber}_{#1}\left(#2\right)}
\newcommand{\bei}[2]{\text{bei}_{#1}\left(#2\right)}
\numberwithin{equation}{section}
\begin{document}
\title{Energy dissipation in viscoelastic Bessel media}

    \author{Ivano Colombaro$^1$}
		\address{${}^1$ Department of Information and Communication Technologies, Universitat 
		Pompeu Fabra, C/Roc Boronat 138, Barcelona, Spain.}
		\email{ivano.colombaro@upf.edu}
	
	    \author{Andrea Giusti$^2$}
		\address{${}^2$ Institute for Theoretical Physics, ETH Zurich, Wolfgang-Pauli-Strasse 27,
		8093, Zurich, Switzerland.}	
 		\email{agiusti@phys.ethz.ch}
	
\author{Andrea Mentrelli$^3$}
\address{${}^3$ Department of Mathematics $\&$ Alma Mater Research Center on Applied
		Mathematics (AM$^2$), University of Bologna,
		Via Saragozza 8, Bologna, Italy.}	
 		\email{andrea.mentrelli@unibo.it}

    \date  {\today}

\begin{abstract}
We investigate the specific attenuation factor for the Bessel models of viscoelasticity. We find that the quality factor for this class can be expressed in terms of Kelvin functions and that its asymptotic behaviours confirm the analytical results found in previous studies for the rheological properties of these models.
\end{abstract}

    \maketitle
    
\section{Introduction}

Viscoelasticity represents one of the most compelling and vibrant research topics in continuum mechanics, both from the perspectives of applied sciences and mathematics. For comprehensive reviews of the history of this field and its modern developments we refer the interested readers to, {\em e.g.}, \cite{pipkin2012lectures,RogosinMainardi2014,mainardi2011creep,Mainardi2010book}. In this context, it is of particular interest the role played by non-local operators, with particular regard for fractional ones \cite{mainardi2011creep,Mainardi2010book,giusti2020practicalguide}. More precisely, fractional derivatives are, loosely speaking, mathematical objects belonging to a subclass of {\em weakly-singular Volterra-type convolution integro-differential operators}, for further details see, {\em e.g.}, \cite{MainardiGorenflo1997FC,Samko1993fractint,diethelm2020why}.

In this work we present a study of the processes of storage and dissipation of energy for a specific class of models of linear viscoelasticity, known as {\em Bessel models} \cite{colombaro2017bessel}. To this end, we shall analytically compute the so-called {\em quality factor}, {\em i.e.} $Q$-factor, \cite{Mainardi2010book,borcherdt2009viscoelastic} starting from the Laplace representation of the creep compliance for a viscoelastic medium governed by the Bessel constitutive laws \cite{colombaro2017bessel}. To carry out our analysis we will mostly follow \cite[Sect. 2]{colombaro2018storage}, that consists of a coherent summary of the arguments in \cite{Mainardi2010book,borcherdt2009viscoelastic}.

The work is therefore organised as follows. In Section~\ref{sec:BesselMod} we review the {\em creep representation} for the Bessel models of linear viscoelasticity and their generalities. In Section~\ref{sec:Qfactor} we derive explicit expressions for the $Q$-factor for these models in terms of special functions of particular interest. Section~\ref{sec:numerical} presents some numerical results and illuminating plots for the computed $Q$-factor. Lastly, in Section~\ref{sec:conclusion} we summarise the main results of the study and provide some concluding remarks. 

\subsection*{Acknowledgments}
The authors would like to thank Francesco Mainardi for helpful discussions.
A.G. is supported by the European Union's Horizon 2020 research and innovation programme under the Marie Sk\l{}odowska-Curie Actions (grant agreement No.~895648). 
A.M. is partially supported by the PRIN2017 project ``Multiscale phenomena in Continuum Mechanics: singular limits, off-equilibrium and transitions'' (Project Number: 2017YBKNCE).
The work of the authors has also been carried out in the framework of the activities of the Italian National Group of Mathematical Physics [Gruppo Nazionale per la Fisica Matematica (GNFM), Istituto 
Nazionale di Alta Matematica (INdAM)].
\section{Bessel Models in Linear Viscoelasticity} \label{sec:BesselMod}
	In linear viscoelasticity the constitutive relation for a uniaxial homogeneous and isotropic viscoelastic body in the {\em creep representation} reads \cite{Mainardi2010book}
\be
\label{eq:constitutive}
\varepsilon (t) = J_{\rm g} \, \sigma (t) + \int_0 ^t \dot{J} (t-t') \, \sigma (t) \, {\rm d} t' \, ,
\ee
where $\varepsilon (t)$ and $\sigma (t)$ are respectively the (uniaxial) stress and strain functions, $J(t)$ is the material function known as {\em creep compliance}, $J_{\rm g} := J(0^+) \geq 0$ is the glass compliance, and the dot denotes a derivative with respect to time. Note that $J(t)$ is a {\em causal function} and hence it vanishes for $t<0$. Additionally, we define the {\em rate of creep (compliance)} for a viscoelastic system as
\be
\label{eq:memory}
\Psi (t) := \frac{\dot{J}(t)}{J_{\rm g}} \, ,
\ee
that keeps track of the memory effects in the model.

Under some loose regularity conditions one can Laplace transform both sides of Eqs.~\eqref{eq:constitutive} and \eqref{eq:memory}, that yields 
\be
\wt{\varepsilon} (s) = s \wt{J} (s) \wt{\sigma} (s)	\quad \mbox{and} \quad 
s\wt{J} (s) = 1 + \wt{\Psi} (s) \, ,
\ee
with $s \in \C$ the complex Laplace frequency and 
\be
\mathcal{L}\left\{ f(t) \, ; \, s \right\} \equiv \wt{f} (s) := \int_0 ^\infty \mathrm{e}^{-st} \, f(t) \, {\rm d} t \, ,
\ee
denoting the Laplace transform of a sufficiently regular causal function $f(t)$.

The {\em Bessel models} \cite{colombaro2017bessel} are a class of viscoelastic models characterised by a creep rate $\Psi (t;\nu)$ expressed, for $\nu >-1$, in terms of the Dirichlet series
\be
\label{eq:dir}
\Psi (t;\nu) = 4(\nu+1)(\nu+2) +  4(\nu+1)\sum_{k=1}^\infty \exp\left( -j_{\nu+2,k}^2 t \right)	\,,
\ee
where $j_{\nu+2,k}$ are the $k$-th positive real root of the Bessel function of the first kind $J_{\nu + 2}(t)$ (see, {\em e.g.}, \cite{Abramowitz1965handbook} for details). Note that this series is absolutely convergent for $t>0$. 

Taking the Laplace transform of Eq.~\eqref{eq:dir} one finds
\be 
\label{eq:Psi-laplace}
\wt{\Psi} (s;\nu) :=  \frac{2 \, (\nu + 1)}{\sqrt{s}} \, \frac{I _{\nu + 1} (\sqrt{s})}{I _{\nu + 2} (\sqrt{s})} \, ,
\ee
where $I_{\alpha} (z)$ denotes the modified Bessel functions of the first kind \cite{Abramowitz1965handbook}
\be
I_{\alpha} (z) := \left( \frac{z}{2} \right) ^{\alpha} \sum _{m=0} ^\infty \frac{1}{m! \, \Gamma (m+\alpha+1)} \left( \frac{z}{2} \right) ^{2m} \,,
\ee
with $\Gamma (z)$ representing the Euler Gamma function. Then, as showed in~\cite{colombaro2017bessel}, one finds that the creep compliance for the Bessel models reads
\begin{equation}
\label{eq:Bessel-sJ(s)}
s \J (s;\nu) = 1 + \widetilde{\Psi} (s; \nu) =
1+ \frac{2 (\nu + 1)}{\sqrt{s}} \frac{I_{\nu + 1} (\sqrt{s})}{I_{\nu + 2} (\sqrt{s})} \, , 
\end{equation}
that, taking advantage of the identity \cite{Abramowitz1965handbook}
$$
I_{\nu-1}\left(z\right) - I_{\nu+1}\left(z\right) = \frac{2\nu}{z} I_{\nu}\left(z\right) \, ,
$$
can be recast as 
\begin{equation}
	\label{eq:Bessel-sJ(s)-bis}
	s \J (s;\nu) = \frac{I_{\nu} (\sqrt{s})}{I_{\nu + 2} (\sqrt{s})} \, .
\end{equation}
This expression will be the starting point for computing the $Q$-factor for the Bessel models.

Before moving on to the explicit computation of the quality factor for these models it is worth taking some time to highlight the origin and main results concerning this class of viscoelastic systems. To start off, it is worth mentioning that the Bessel viscoelastic class was formulated in \cite{colombaro2017bessel} as a generalisation of a mathematical model for the propagation of blood pulses within large arteries \cite{giusti2016dynamic}. The mathematical techniques developed in \cite{colombaro2017bessel,giusti2016dynamic} were the used to provide an alternative derivation of the Rayleigh-Sneddon sum \cite{giusti2016infiniteseries}. In \cite{giusti2017infiniteorder} it was shown that the constitutive relations of the Bessel models are {\em ordinary infinite-order} differential equations, whereas the short time behaviour for these systems effectively reduces to a fractional Maxwell model of order $1/2$ and $\nu$-dependent relaxation time. Lastly, taking advantage of the Buchen-Mainardi algorithm \cite{buchen1975asymptotic} (see \cite{colombaro2017transientwaves} for a review on the subject) the propagation of transient waves in a semi-infinite Bessel medium was investigated, deriving the precise form of the wave-front expansion.

\begin{rmk}
The time variable $t$, in this section and in the following, is effectively non-dimensional since, for the sake of convenience, we have set the relaxation time $\tau$ to unity. 
\end{rmk}
\section{Quality factor for the Bessel models} \label{sec:Qfactor}
The specific attenuation factor or quality factor, often abbreviated as $Q$-factor, is a non-dimensional quantity that measures the dissipation of energy for sinusoidal excitations in stress or strain \cite{Mainardi2010book,borcherdt2009viscoelastic,colombaro2018storage,mainardi2019genbecker}.

Following \cite{Mainardi2010book} and \cite{colombaro2018storage}, given a complex creep compliance $\wt{J}(s)$, in the Laplace domain, one can obtain the corresponding $Q$-factor as \cite{Mainardi2010book}
\be
\label{eq:Q-general}
Q^{-1} (\omega) = -\frac{\Im\left\{ s \, \J (s) \vert_{s=\imath\omega} \right\} }{\Re \left\{ s \, \J (s) \vert_{s=\imath\omega} \right\} }	\,,
\ee
where $\omega>0$ is the frequency of the harmonic excitations of the material.

First, let us define {\em Tricomi's uniform modified Bessel functions of the first kind} (in analogy with Tricomi's uniform Bessel functions discussed in \cite{Mainardi2010book}) as follows:
\be
\label{eq:tricomif}
I^{\rm T}_\alpha (z) := \left( \frac{z}{2} \right) ^{-\alpha} \, I_\alpha (z) = \sum _{m=0} ^\infty \frac{1}{m! \, \Gamma (m+\alpha+1)} \left( \frac{z}{2} \right) ^{2m} \, .
\ee
For these functions one has that:
\begin{lemma} \label{lemma-1}
Let $\alpha \in \R$, $\alpha>-1$, and $z \in \C$. Then, $I^{\rm T}_\alpha (z)$ is an entire function and $I^{\rm T}_\alpha (\sqrt{z})$ is both single-valued and entire.
\end{lemma}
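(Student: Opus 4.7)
The plan is straightforward, since the defining series
$I^{\rm T}_\alpha(z) = \sum_{m\ge 0} \frac{1}{m!\,\Gamma(m+\alpha+1)} (z/2)^{2m}$
contains only even powers of $z$. First I would apply the ratio test (or Cauchy--Hadamard) to the coefficients $c_m := 1/[4^m\, m!\,\Gamma(m+\alpha+1)]$ of the associated power series in $z^2$. The hypothesis $\alpha>-1$ guarantees $m+\alpha+1>0$ for every $m\ge 0$, so all coefficients are finite and strictly positive, and
$c_{m+1}/c_m = 1/[4\,(m+1)(m+\alpha+1)] \to 0$.
Hence the radius of convergence is infinite and $I^{\rm T}_\alpha$ is holomorphic on all of $\mathbb{C}$, i.e. entire.

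For the claim about $I^{\rm T}_\alpha(\sqrt{z})$, the key observation is exactly that the original series involves only $(z/2)^{2m}$: substituting $\sqrt{z}$ replaces $(\sqrt{z}/2)^{2m}$ by $z^m/4^m$, with no fractional powers of $z$ surviving, independent of which branch of $\sqrt{\cdot}$ one picks. Thus one can write
$I^{\rm T}_\alpha(\sqrt{z}) = \sum_{m\ge 0} \frac{z^m}{4^m\, m!\,\Gamma(m+\alpha+1)}$,
a bona fide power series in $z$, which is therefore single-valued on $\mathbb{C}$. Its radius of convergence is given by the same ratio estimate as above, again yielding $+\infty$, so the function is entire as well.

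No real obstacle appears: the entire argument reduces to recognising the power-series structure in $z^2$ together with the super-geometric growth of $\Gamma(m+\alpha+1)$. The condition $\alpha>-1$ is used only to ensure that every $\Gamma(m+\alpha+1)$ stays away from its poles, so that the coefficients and the ratio-test computation are literally well-defined for all $m\ge 0$.
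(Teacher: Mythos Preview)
Your proof is correct and follows essentially the same approach as the paper: both use the ratio test on the series coefficients to obtain an infinite radius of convergence, and both observe that only even powers of $z$ appear so that the substitution $z\mapsto\sqrt{z}$ yields a genuine power series in $z$, hence single-valued and entire. Your version is in fact slightly cleaner, since you work directly with the coefficient sequence rather than splitting off the case $z=0$.
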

\begin{proof}
First, if $z=0$ it is clear from Eq.~\eqref{eq:tricomif} that $I^{\rm T}_\alpha (0) = 1/\Gamma (\alpha+1)$ which is surely well defined for $\alpha>-1$. If instead $z \in \C \setminus \{0\}$ and
$$ a_k (z) = \frac{1}{k! \, \Gamma (k+\alpha+1)} \left( \frac{z}{2} \right) ^{2k} \, , $$
then $\left| a_{k+1} (z)/a_k (z) \right| = O (k^{-2})$ and $\left| a_{k+1} (\sqrt{z})/a_k (\sqrt{z}) \right| = O (k^{-2})$ as $k \to +\infty$. Hence the series in $I^{\rm T}_\alpha ({z})$ and $I^{\rm T}_\alpha (\sqrt{z})$ converge everywhere in $\C$ as a consequence of the ratio test. 
Second, from the definition \eqref{eq:tricomif} one has that
$$
I^{\rm T}_\alpha (\sqrt{z}) = \sum _{m=0} ^\infty \frac{1}{m! \, \Gamma (m+\alpha+1)} \left( \frac{z}{4} \right) ^{m} \, ,
$$
which is clearly a single-valued function on $\C$.
\end{proof}

Therefore, it follows that:
\begin{prop}\label{proposition-1}
$s \wt{J}(s)$ as in Eq.~\eqref{eq:Bessel-sJ(s)-bis} is single-valued and
\be
\label{eq-intermediate}
s \J (s;\nu) = \frac{4}{s} \frac{I^{\rm T}_{\nu} (\sqrt{s})}{I^{\rm T}_{\nu + 2} (\sqrt{s})} \, .
\ee
\end{prop}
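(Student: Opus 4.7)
The plan is to derive the identity \eqref{eq-intermediate} by a direct substitution using the definition \eqref{eq:tricomif} of Tricomi's uniform modified Bessel functions, and then to obtain single-valuedness as an immediate corollary of Lemma~\ref{lemma-1}.

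First I would rewrite the ordinary modified Bessel function of the first kind in terms of its Tricomi counterpart. From \eqref{eq:tricomif}, for any admissible index $\alpha$ one has the elementary relation $I_\alpha(z) = (z/2)^\alpha \, I^{\rm T}_\alpha(z)$. Applying this with $\alpha = \nu$ and $\alpha = \nu + 2$, and with $z = \sqrt{s}$, the ratio appearing in Eq.~\eqref{eq:Bessel-sJ(s)-bis} becomes
\begin{equation*}
\frac{I_\nu(\sqrt{s})}{I_{\nu+2}(\sqrt{s})}
= \frac{(\sqrt{s}/2)^{\nu}\, I^{\rm T}_\nu(\sqrt{s})}{(\sqrt{s}/2)^{\nu+2}\, I^{\rm T}_{\nu+2}(\sqrt{s})}
= \left(\frac{\sqrt{s}}{2}\right)^{-2} \frac{I^{\rm T}_\nu(\sqrt{s})}{I^{\rm T}_{\nu+2}(\sqrt{s})}
= \frac{4}{s}\, \frac{I^{\rm T}_\nu(\sqrt{s})}{I^{\rm T}_{\nu+2}(\sqrt{s})} \, ,
\end{equation*}
which is exactly \eqref{eq-intermediate}. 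The key point is that the multivalued prefactors $(\sqrt{s}/2)^\nu$ and $(\sqrt{s}/2)^{\nu+2}$, despite each being multivalued for general real $\nu$, combine into the single-valued factor $4/s$, because their quotient depends only on the integer difference of the exponents.

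For the single-valuedness claim, I would invoke Lemma~\ref{lemma-1}: since both $\nu > -1$ and $\nu + 2 > -1$, the functions $I^{\rm T}_\nu(\sqrt{s})$ and $I^{\rm T}_{\nu+2}(\sqrt{s})$ are single-valued (and in fact entire) on $\C$. The factor $4/s$ is trivially single-valued on $\C \setminus \{0\}$, and so the product on the right-hand side of \eqref{eq-intermediate} is single-valued on its domain of definition. Hence $s \widetilde{J}(s;\nu)$ is single-valued as claimed.

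There is no real obstacle: the statement is essentially a clever repackaging of the original formula designed to eliminate the apparent branch-point ambiguity of $\sqrt{s}$. The only minor point worth a sentence in the write-up is the cancellation of the fractional powers of $s$ in the prefactor, which is what makes the whole expression manifestly single-valued.
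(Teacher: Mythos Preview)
Your proof is correct and follows essentially the same approach as the paper: rewrite $I_\alpha(\sqrt{s})$ via the Tricomi function so that the prefactors cancel to $4/s$, then invoke Lemma~\ref{lemma-1} for single-valuedness. The only difference is that you spell out the cancellation of the $(\sqrt{s}/2)^\alpha$ factors explicitly, whereas the paper leaves it implicit.
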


\begin{proof}
From Eqs.~\eqref{eq:Bessel-sJ(s)-bis} and \eqref{eq:tricomif} one has that
\be
s \J (s;\nu) = \frac{I_{\nu} (\sqrt{s})}{I_{\nu + 2} (\sqrt{s})} = \frac{4}{s} \frac{I^{\rm T}_{\nu} (\sqrt{s})}{I^{\rm T}_{\nu + 2} (\sqrt{s})} \, ,
\ee 
which is therefore single-valued as a consequence of Lemma~\ref{lemma-1}.
\end{proof}

The last proposition allows one to compute $s \J (s;\nu) \big\vert _{s=\imath\omega}$ without the risk of incurring in branch cuts and branch points for positive real frequencies $\omega$.

Let us continue with some preliminary results.

\begin{lemma} \label{lemma-2}
Let $\omega \in \R$, $\omega > 0$, and $m \in \N$. Then,
\be
(\imath \omega)^m = \left\{
\begin{aligned}
& (-1)^n \omega^{2n} \, , \,\, \qquad \mbox{if} \,\, m=2n \, , \,\, n \in \N \,\,\, \mbox{(even)} \\
& \imath (-1)^n \omega^{2n+1} \, , \quad \mbox{if} \,\, m=2n+1 \, , \,\, n \in \N \,\,\, \mbox{(odd)}
\end{aligned}
\right.
\ee
\end{lemma}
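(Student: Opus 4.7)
The plan is to reduce everything to the two elementary identities $i^2 = -1$ and $(i\omega)^m = i^m\,\omega^m$, and then split on the parity of $m$. Since $\omega > 0$ is real, $\omega^m$ is just a positive real number, so the only thing that matters is how the factor $i^m$ behaves, and the claim simply re-expresses the well-known four-periodic cycle of $i^m$ restricted to the even/odd dichotomy.

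First I would write $(i\omega)^m = i^m \omega^m$. For $m = 2n$ even, I would compute $i^{2n} = (i^2)^n = (-1)^n$, which immediately gives $(i\omega)^{2n} = (-1)^n \omega^{2n}$. For $m = 2n+1$ odd, I would factor $(i\omega)^{2n+1} = (i\omega)^{2n} \cdot (i\omega)$ and plug in the even case just proved, yielding $(-1)^n\omega^{2n} \cdot i\omega = i(-1)^n \omega^{2n+1}$. A brief induction on $n \in \N$ for the even case (base $n=0$: $(i\omega)^0 = 1 = (-1)^0\omega^0$; step: multiply by $(i\omega)^2 = -\omega^2$) would make the argument fully rigorous without invoking anything beyond the ring axioms.

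There is essentially no obstacle here: the statement is a direct computational identity and requires nothing more than bookkeeping of signs. The only minor care to take is the base case $m=0$ (equivalently $n=0$ in the even branch), which is consistent since $(i\omega)^0 = 1 = (-1)^0 \omega^0$, so the formula holds for all $n \in \N$ including zero. The lemma is clearly intended as a notational convenience to be invoked later when separating real and imaginary parts of the series for $I^{\rm T}_\alpha(\sqrt{i\omega})$ in the computation of $Q^{-1}(\omega)$.
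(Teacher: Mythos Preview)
Your proposal is correct and matches the paper's treatment: the paper's proof consists of the single word ``Trivial.'' You have simply spelled out the obvious parity split $i^{2n}=(-1)^n$ and $i^{2n+1}=i(-1)^n$, which is exactly what underlies that one-word proof.
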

\begin{proof}
Trivial.
\end{proof}

Now, defining the two functions
\be
\label{eq:f}
f_\alpha (z) := \sum _{n=0} ^\infty \frac{(-1)^n z^{2n}}{2^{4n}(2n)! \, \Gamma (2n+\alpha +1)} \,,
\ee
\be
\label{eq:g}
g_\alpha (z) := \sum_{n=0}^\infty \frac{(-1)^n z^{2n+1}}{2^{4n+2}(2n+1)! \, \Gamma (2n+\alpha +2) }	\,,
\ee
One can conclude that:
\begin{lemma} \label{lemma-3}
Let $\omega, \alpha \in \R$, $\omega > 0$ and $\alpha > -1$. 
Then $f_\alpha (z)$ and $g_\alpha (z)$ are entire functions.
\end{lemma}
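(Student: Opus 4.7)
The plan is to mimic the argument used in Lemma~\ref{lemma-1}: establish that both $f_\alpha$ and $g_\alpha$ are given by power series which converge absolutely on all of $\C$, and then invoke the standard fact that a power series with infinite radius of convergence defines an entire function.

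First I would observe that, since $\alpha>-1$, the quantities $2n+\alpha+1$ and $2n+\alpha+2$ are strictly positive for every $n\in\N$, so the Gamma factors in the denominators of \eqref{eq:f} and \eqref{eq:g} are finite and nonzero; hence every term of each series is well defined.

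Next I would set
$$
a_n(z) := \frac{(-1)^n z^{2n}}{2^{4n}(2n)!\,\Gamma(2n+\alpha+1)}, \qquad b_n(z) := \frac{(-1)^n z^{2n+1}}{2^{4n+2}(2n+1)!\,\Gamma(2n+\alpha+2)},
$$
and apply the ratio test pointwise at any fixed $z\in\C\setminus\{0\}$. A brief computation using $\Gamma(x+1)=x\,\Gamma(x)$ yields
$$
\left|\frac{a_{n+1}(z)}{a_n(z)}\right|=\frac{|z|^2}{16\,(2n+2)(2n+1)(2n+\alpha+2)(2n+\alpha+1)}=O(n^{-4}),
$$
and analogously $|b_{n+1}(z)/b_n(z)|=O(n^{-4})$ as $n\to\infty$. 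In particular both ratios tend to zero, so the two series converge absolutely for every $z\in\C$; the case $z=0$ is immediate from the explicit formulas.

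The argument is essentially a direct analogue of the first half of Lemma~\ref{lemma-1}, so I do not anticipate any serious obstacle. The only place where a hypothesis is really used is $\alpha>-1$, which keeps $\Gamma(2n+\alpha+1)$ and $\Gamma(2n+\alpha+2)$ away from the non-positive integer poles of the Gamma function and thus ensures that no term of the series is singular.
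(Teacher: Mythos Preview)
Your proposal is correct and follows exactly the approach of the paper's own proof: check $z=0$ separately and apply the ratio test for $z\in\C\setminus\{0\}$. You simply spell out more details (the explicit $O(n^{-4})$ ratio and the role of $\alpha>-1$ in keeping the Gamma factors away from poles) than the paper does.
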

\begin{proof}
Again, for $z=0$ one gets that both $f_\alpha (0)$ and $g_\alpha (0)$ are finite. Furthermore, employing again the ratio test it is easy to see that the series in Eqs.~\eqref{eq:f} and \eqref{eq:g} converge for all $z \in \C \setminus \{0\}$.
\end{proof}

\begin{prop} \label{prop-2}
Let $\omega, \nu \in \R$, $\omega > 0$ and $\nu > -1$, then
\be
\label{eq:split}
I^{\rm T}_\nu (\sqrt{\imath \omega}) = f_\nu (\omega) + \imath \, g_\nu (\omega) \, ,
\ee
and
\be
s \J (s;\nu) \Big\vert _{s=\imath\omega} = \frac{4}{\imath\omega} \frac{f_\nu (\omega)  + \imath g_\nu(\omega)}{f_{\nu+2} (\omega)  + \imath g_{\nu+2} (\omega)}
\ee
\end{prop}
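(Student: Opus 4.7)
The plan is to obtain the decomposition $I^{\rm T}_\nu(\sqrt{\imath\omega}) = f_\nu(\omega) + \imath g_\nu(\omega)$ by direct series manipulation, and then derive the second formula by inserting this decomposition into the expression for $s\J(s;\nu)$ supplied by Proposition~\ref{proposition-1}.

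For the first identity, I would begin from the single-valued entire representation of $I^{\rm T}_\alpha(\sqrt{z})$ established inside the proof of Lemma~\ref{lemma-1}, namely $I^{\rm T}_\alpha(\sqrt{z}) = \sum_{m=0}^\infty [m!\,\Gamma(m+\alpha+1)]^{-1} (z/4)^m$, and specialise to $z = \imath\omega$ with $\omega > 0$. Splitting the sum according to the parity of $m$ and invoking Lemma~\ref{lemma-2}, the even terms ($m=2n$) contribute $(-1)^n \omega^{2n}/[2^{4n}(2n)!\,\Gamma(2n+\alpha+1)]$, which is exactly the $n$-th summand of $f_\alpha(\omega)$, while the odd terms ($m=2n+1$) contribute $\imath (-1)^n \omega^{2n+1}/[2^{4n+2}(2n+1)!\,\Gamma(2n+\alpha+2)]$, i.e. $\imath$ times the $n$-th summand of $g_\alpha(\omega)$. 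The rearrangement of the original series into its real and imaginary parts is legitimate because the $z$-series converges absolutely on all of $\C$ (Lemma~\ref{lemma-1}) and both $f_\alpha$ and $g_\alpha$ are entire in their own right (Lemma~\ref{lemma-3}). Specialising to $\alpha = \nu$ yields the claimed decomposition.

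For the second identity I would substitute $s = \imath\omega$ directly into Eq.~\eqref{eq-intermediate} of Proposition~\ref{proposition-1} to obtain $s\J(s;\nu)\big|_{s=\imath\omega} = (4/\imath\omega)\,I^{\rm T}_\nu(\sqrt{\imath\omega})/I^{\rm T}_{\nu+2}(\sqrt{\imath\omega})$, and then apply the first part with $\alpha = \nu$ in the numerator and $\alpha = \nu + 2$ in the denominator. The main worry one might have at this step is the potential branch ambiguity of $\sqrt{\imath\omega}$, but this is entirely bypassed: Lemma~\ref{lemma-1} guarantees that $I^{\rm T}_\alpha(\sqrt{z})$ is single-valued in $z$, so the symbol $\sqrt{\imath\omega}$ is effectively a placeholder for a specific entire function of $\imath\omega$ and no branch need be chosen. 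Consequently the whole argument reduces to bookkeeping on absolutely convergent power series, and I do not anticipate any real obstacle beyond a careful bookkeeping of the parity split.
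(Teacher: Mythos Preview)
Your proposal is correct and follows essentially the same approach as the paper: start from the series for $I^{\rm T}_\nu(\sqrt{\imath\omega})$, split into even and odd powers via Lemma~\ref{lemma-2} to identify $f_\nu$ and $g_\nu$ (with Lemma~\ref{lemma-3} justifying the rearrangement), then substitute into Eq.~\eqref{eq-intermediate} from Proposition~\ref{proposition-1}. Your treatment is in fact more explicit than the paper's, which simply cites Lemmas~\ref{lemma-2} and~\ref{lemma-3} and leaves the parity bookkeeping to the reader.
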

\begin{proof}
Consider the series representation \eqref{eq:tricomif}, {\em i.e.},
$$
I^{\rm T}_\nu (\sqrt{\imath \omega}) = \sum _{m=0} ^\infty \frac{(\imath \omega)^{m}}{2 ^{2m} \,m! \, \Gamma (m+\nu+1)} \, ,
$$
then taking advantage of Lemma~\ref{lemma-2} and \ref{lemma-3} one can split the right-hand side of this last expression and recognise that Eq.~\eqref{eq:split} holds. For the second part of the proof it suffices to replace Eq.~\eqref{eq:split} into Eq.~\eqref{eq-intermediate} in Proposition~\ref{proposition-1}.
\end{proof}

\begin{thm} \label{thm-1}
Let $\omega, \nu \in \R$, $\nu>-1$, and $\omega > 0$. Then the $Q$-factor for the Bessel models reads:
\be
\label{eq:q-bessel-1}
Q^{-1} \left(\omega; \nu\right) =
\frac{f_{\nu} (\omega)f_{\nu+2} (\omega) + g_{\nu} (\omega)g_{\nu+2} (\omega)}{g_{\nu} (\omega)f_{\nu+2} (\omega) - f_{\nu} (\omega)g_{\nu+2} (\omega)} \, .
\ee
\end{thm}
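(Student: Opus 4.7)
The plan is to start from the closed-form expression for $s\J(s;\nu)\vert_{s=\imath\omega}$ provided by Proposition~\ref{prop-2}, separate it cleanly into real and imaginary parts, and then feed these into the defining formula \eqref{eq:Q-general} for the $Q$-factor. Since every ingredient needed is already in place (Lemmas~\ref{lemma-1}--\ref{lemma-3} guarantee that $f_\nu$, $g_\nu$, $f_{\nu+2}$, $g_{\nu+2}$ are honest real-valued entire functions of $\omega$, and Proposition~\ref{proposition-1} rules out branch ambiguities on the imaginary axis), the argument reduces to algebraic manipulation of a ratio of complex numbers.

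Concretely, I would write
\[
s\J(s;\nu)\big\vert_{s=\imath\omega} \;=\; \frac{4}{\imath\omega}\,\frac{f_\nu(\omega)+\imath\,g_\nu(\omega)}{f_{\nu+2}(\omega)+\imath\,g_{\nu+2}(\omega)},
\]
and then rationalise the denominator by multiplying numerator and denominator by $f_{\nu+2}(\omega)-\imath\,g_{\nu+2}(\omega)$, which is legitimate provided $f_{\nu+2}^2+g_{\nu+2}^2$ does not vanish for $\omega>0$; this non-vanishing follows from the fact that $I^{\rm T}_{\nu+2}(\sqrt{\imath\omega})\neq 0$, since by Lemma~\ref{lemma-1} and the standard non-vanishing of modified Bessel functions on the positive real axis $\sqrt{\imath\omega}$ lies in the region where $I_{\nu+2}$ has no zeros. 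Using $1/\imath=-\imath$, the resulting expression collapses to
\[
s\J(s;\nu)\big\vert_{s=\imath\omega} \;=\; \frac{4}{\omega\bigl(f_{\nu+2}^2+g_{\nu+2}^2\bigr)}\Bigl[\bigl(g_\nu f_{\nu+2}-f_\nu g_{\nu+2}\bigr) \;-\; \imath\bigl(f_\nu f_{\nu+2}+g_\nu g_{\nu+2}\bigr)\Bigr],
\]
from which the real and imaginary parts can be read off directly.

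Finally I would substitute these expressions into the definition \eqref{eq:Q-general}; the common positive prefactor $4/[\omega(f_{\nu+2}^2+g_{\nu+2}^2)]$ cancels in the ratio $-\Im/\Re$, yielding precisely \eqref{eq:q-bessel-1}. The proof is essentially mechanical: the only points that deserve explicit mention are (i) justifying that the denominator $g_\nu f_{\nu+2}-f_\nu g_{\nu+2}$ is well-defined and not identically zero (so that the $Q$-factor is meaningful), and (ii) handling the sign coming from $1/\imath=-\imath$, which is exactly what converts the minus sign in \eqref{eq:Q-general} into the ``plus'' pattern of the numerator of \eqref{eq:q-bessel-1}. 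I expect no serious obstacle beyond bookkeeping; the heavy lifting has already been absorbed into Proposition~\ref{prop-2}.
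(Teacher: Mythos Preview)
Your proposal is correct and follows essentially the same route as the paper: start from Proposition~\ref{prop-2}, rationalise the complex fraction to separate real and imaginary parts, and plug into Eq.~\eqref{eq:Q-general} so that the common prefactor cancels. The paper's own proof is even terser---it does not pause to justify the non-vanishing of $f_{\nu+2}^2+g_{\nu+2}^2$ or of the denominator in \eqref{eq:q-bessel-1}---so your added remarks (i) and (ii) go slightly beyond what the authors provide; note, however, that your stated reason for $I^{\rm T}_{\nu+2}(\sqrt{\imath\omega})\neq 0$ is a bit off, since $\sqrt{\imath\omega}$ lies on the ray $\arg z=\pi/4$ rather than the positive real axis (the correct argument is that for $\nu+2>0$ all zeros of $I_{\nu+2}$ are purely imaginary).
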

\begin{proof}
From Proposition~\ref{prop-2} it is easy to see that
\be
\nonumber
s \J (s;\nu) \Big\vert _{s=\imath\omega} &=& \frac{4}{\imath\omega} \frac{f_\nu (\omega)  + \imath g_\nu(\omega)}{f_{\nu+2} (\omega)  + \imath g_{\nu+2} (\omega)}\\
 &=& \nonumber \frac{4}{\omega} \frac{g_{\nu} (\omega)f_{\nu+2} (\omega) - f_{\nu} (\omega)g_{\nu+2} (\omega) - \imath \bigl(f_{\nu} (\omega)f_{\nu+2} (\omega) + g_{\nu} (\omega)g_{\nu+2} (\omega)\bigr)}{\vert f_{\nu+2} (\omega) \vert^2 + \vert g_{\nu+2} (\omega) \vert^2} \, ,
\ee
via a direct computation. This allows one to separate directly the real and imaginary parts of $s \J (s;\nu) \big\vert _{s=\imath\omega}$. Indeed, one finds
\be
\nonumber
\Re \left\{ s \, \J (s) \big\vert_{s=i\omega} \right\} &=&\frac{4}{\omega}
\frac{g_{\nu} (\omega)f_{\nu+2} (\omega) - f_{\nu} (\omega)g_{\nu+2} (\omega) }{\vert f_{\nu+2} (\omega) \vert^2 + \vert g_{\nu+2} (\omega) \vert^2} \, , \\
\nonumber
\Im \left\{ s \, \J (s) \big\vert_{s=i\omega} \right\}  &=& - \frac{4}{\omega}
\frac{f_{\nu} (\omega)f_{\nu+2} (\omega) + g_{\nu} (\omega)g_{\nu+2} (\omega)}{\vert f_{\nu+2} (\omega) \vert^2 + \vert g_{\nu+2} (\omega) \vert^2} \, .
\ee
Inserting these expressions into Eq.~\eqref{eq:Q-general} yields the final result.
\end{proof}

It is now interesting to introduce another couple of special functions. Specifically, let us consider the {\em Kelvin functions} \cite{Abramowitz1965handbook} $\ber{\alpha}{x}$ and $\bei{\alpha}{x}$. These functions are, respectively, the real and imaginary parts of $J_{\alpha}\left(x e^{i\frac{3}{4}\pi}\right)$, {\em i.e.},
\be
\label{eq:ber}
\ber{\alpha}{z} := \left(\frac{z}{2}\right)^\alpha  \sum_{k=0}^\infty
\frac{\cos\left[\left( \frac{3\alpha}{4}+\frac{k}{2}\right)\pi\right]}{k!\Gamma(k+\alpha+1)} \left(\frac{z^2}{4}\right)^k \, ,
\ee
\be
\label{eq:bei}
\bei{\alpha}{z} := \left(\frac{z}{2}\right)^\alpha  \sum_{m=0}^\infty
\frac{\sin\left[\left( \frac{3\alpha}{4}+\frac{m}{2}\right)\pi\right]}{m!\Gamma(m+\alpha+1)} \left(\frac{z^2}{4}\right)^m \, .
\ee

\begin{lemma} \label{lemma-k1}
Let $\alpha \in \R$ and $k \in \mathbb{Z}$. Then,
\be
\nonumber
\cos\left( \frac{3\pi}{4}\alpha+\frac{k\pi}{2}\right)= 
\left\{
\begin{aligned}
& (-1)^n\cos\left(3\pi\alpha/4\right) \, , \qquad \mbox{if} \,\, m=2n \, , \,\, n \in \N \,\,\, \mbox{(even)} \\
& (-1)^{n+1}\sin\left(3\pi \alpha/4\right) \, , \quad \mbox{if} \,\, m=2n+1 \, , \,\, n \in \N \,\,\, \mbox{(odd)}
\end{aligned}
\right.
\ee
\be
\nonumber
\sin\left( \frac{3\pi}{4}\alpha+\frac{k\pi}{2}\right)= 
\left\{
\begin{aligned}
& (-1)^n\sin\left(3\pi\alpha/4\right) \, , \qquad \mbox{if} \,\, m=2n \, , \,\, n \in \N \,\,\, \mbox{(even)} \\
& (-1)^{n}\cos\left(3\pi \alpha/4\right) \, , \qquad \mbox{if} \,\, m=2n+1 \, , \,\, n \in \N \,\,\, \mbox{(odd)}
\end{aligned}
\right.
\ee
\end{lemma}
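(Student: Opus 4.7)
The plan is to reduce to elementary periodicity and shift identities for sine and cosine, namely $\cos(x+n\pi) = (-1)^n \cos(x)$, $\sin(x+n\pi) = (-1)^n \sin(x)$, $\cos(x+\pi/2) = -\sin(x)$, and $\sin(x+\pi/2) = \cos(x)$. I would simply split into two cases according to the parity of $k$, and in each case rewrite $k\pi/2$ in a form that makes one of these identities directly applicable.

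First, in the even case $k = 2n$, write $k\pi/2 = n\pi$ and argue that
\be
\nonumber
\cos\!\left(\tfrac{3\pi}{4}\alpha + n\pi\right) = (-1)^n \cos\!\left(\tfrac{3\pi\alpha}{4}\right), \qquad
\sin\!\left(\tfrac{3\pi}{4}\alpha + n\pi\right) = (-1)^n \sin\!\left(\tfrac{3\pi\alpha}{4}\right),
\ee
which yields the two ``even $m$'' lines of the statement.

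Second, in the odd case $k = 2n+1$, write $k\pi/2 = n\pi + \pi/2$ and combine the $n\pi$ shift (which again produces a factor $(-1)^n$) with the $\pi/2$ shift, using the complementary identities. This gives
\be
\nonumber
\cos\!\left(\tfrac{3\pi}{4}\alpha + n\pi + \tfrac{\pi}{2}\right) = (-1)^n \bigl[-\sin\!\left(\tfrac{3\pi\alpha}{4}\right)\bigr] = (-1)^{n+1}\sin\!\left(\tfrac{3\pi\alpha}{4}\right),
\ee
and
\be
\nonumber
\sin\!\left(\tfrac{3\pi}{4}\alpha + n\pi + \tfrac{\pi}{2}\right) = (-1)^n \cos\!\left(\tfrac{3\pi\alpha}{4}\right),
\ee
which are exactly the two ``odd $m$'' lines.

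There is no real obstacle; the only thing to watch is that the $\pi/2$ shift flips cosines to sines \emph{with a minus sign} but sines to cosines \emph{without} one, which is precisely what produces the asymmetric $(-1)^{n+1}$ versus $(-1)^n$ in the cosine and sine identities respectively. Since the lemma is purely a bookkeeping statement for use inside the series \eqref{eq:ber}--\eqref{eq:bei}, the proof can be dispatched in a single line, much like the author's treatment of Lemma~\ref{lemma-2}.
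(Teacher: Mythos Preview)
Your argument is correct and matches the paper's own treatment: the authors simply write ``Trivial,'' and your parity split combined with the standard $\pi$- and $\pi/2$-shift identities is exactly the elementary computation that word stands in for. There is nothing to add.
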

\begin{proof}
Trivial.
\end{proof}

\begin{prop} \label{prop-k1}
Let $\alpha,\omega \in \R$, $\alpha >-1$, and $\omega>0$. Then one finds
\be
\left(\frac{2}{\sqrt{\omega}}\right)^\alpha \ber{\alpha}{\sqrt{\omega}} =
\cos\left(  \frac{3\pi}{4}\alpha\right) f_\alpha (\omega) - \sin\left(  \frac{3\pi}{4}\alpha\right) g_\alpha(\omega) \, ,
\ee
\be
 \left(\frac{2}{\sqrt{\omega}}\right)^\alpha \bei{\alpha}{\sqrt{\omega}} =
 \sin\left(  \frac{3\pi}{4}\alpha\right) f_\alpha (\omega) +\cos\left(  \frac{3\pi}{4}\alpha\right) g_\alpha(\omega) \,.
\ee
\end{prop}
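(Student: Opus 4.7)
The plan is to prove both identities by direct manipulation of the series representations \eqref{eq:ber} and \eqref{eq:bei} evaluated at $z=\sqrt{\omega}$, splitting the summation index into even and odd parts via Lemma~\ref{lemma-k1}, and identifying the resulting pieces with the definitions \eqref{eq:f} and \eqref{eq:g} of $f_\alpha$ and $g_\alpha$. Since Lemma~\ref{lemma-3} guarantees the absolute convergence of $f_\alpha$ and $g_\alpha$ on all of $\C$ (and the Kelvin series inherit the same kind of convergence from the Bessel series), all rearrangements below are legitimate.

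First I would substitute $z=\sqrt{\omega}$ into \eqref{eq:ber}; this immediately cancels the $(\sqrt{\omega}/2)^\alpha$ prefactor against the $(2/\sqrt{\omega})^\alpha$ factor on the left, leaving
\[
\left(\frac{2}{\sqrt{\omega}}\right)^{\alpha}\ber{\alpha}{\sqrt{\omega}}
= \sum_{k=0}^{\infty}\frac{\cos\!\left[\left(\tfrac{3\alpha}{4}+\tfrac{k}{2}\right)\pi\right]}{k!\,\Gamma(k+\alpha+1)}\left(\frac{\omega}{4}\right)^{k}.
\]
Then I would split this into contributions from $k=2n$ and $k=2n+1$, apply Lemma~\ref{lemma-k1} to rewrite the trigonometric coefficients as $(-1)^{n}\cos(3\pi\alpha/4)$ and $(-1)^{n+1}\sin(3\pi\alpha/4)$ respectively, and note that $(\omega/4)^{2n}=\omega^{2n}/2^{4n}$ and $(\omega/4)^{2n+1}=\omega^{2n+1}/2^{4n+2}$. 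Comparing the resulting two subseries with \eqref{eq:f} and \eqref{eq:g} gives the first identity.

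The second identity follows by the same procedure, starting from \eqref{eq:bei}, with the difference that Lemma~\ref{lemma-k1} now supplies coefficients $(-1)^{n}\sin(3\pi\alpha/4)$ and $(-1)^{n}\cos(3\pi\alpha/4)$ for the even/odd terms, which is precisely what turns the minus sign into a plus and swaps the roles of cosine and sine. There is no genuine obstacle here: the whole proof is a bookkeeping exercise, and the only thing to be careful about is keeping the factors $2^{4n}$ and $2^{4n+2}$ consistent with the normalisations chosen in \eqref{eq:f}--\eqref{eq:g}, so that the identification with $f_\alpha(\omega)$ and $g_\alpha(\omega)$ is exact rather than up to a multiplicative constant.
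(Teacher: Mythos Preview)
Your proposal is correct and follows essentially the same approach as the paper, which gives only a one-line proof stating that the result follows from Lemma~\ref{lemma-k1} together with the definitions \eqref{eq:f}, \eqref{eq:g}, \eqref{eq:ber}, and \eqref{eq:bei}. You have simply made explicit the even/odd split and the identification of the resulting subseries with $f_\alpha$ and $g_\alpha$, which is exactly what the paper leaves implicit.
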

\begin{proof}
It follows from Lemma~\ref{lemma-k1} together with Eqs. \eqref{eq:f}, \eqref{eq:f}, \eqref{eq:ber}, and \eqref{eq:bei}.
\end{proof}

Now, it is fairly easy to see that the result in Proposition~\ref{prop-k1} can be rewritten as
\be
\label{eq:trasf-1}
f_\alpha (\omega) &=& \left(\frac{2}{\sqrt{\omega}}\right)^\alpha \left[ \cos\left(  \frac{3\pi}{4}\alpha\right) \ber{\alpha}{\sqrt{\omega}} +  \sin\left(  \frac{3\pi}{4}\alpha\right) \bei{\alpha}{\sqrt{\omega}}\right] \, ,
\\
\label{eq:trasf-2} g_\alpha (\omega) &=& \left(\frac{2}{\sqrt{\omega}}\right)^\alpha \left[ -\sin\left(  \frac{3\pi}{4}\alpha\right) \ber{\alpha}{\sqrt{\omega}} + \cos\left(  \frac{3\pi}{4}\alpha\right) \bei{\alpha}{\sqrt{\omega}}\right] \, ,
\ee
which means that one can recast the result in Theorem~\ref{thm-1} as follows.
\begin{thm} \label{thm-2}
Let $\omega, \nu \in \R$, $\nu>-1$, and $\omega > 0$. Then the $Q$-factor for the Bessel models reads:
\be
\label{eq:q-bessel-2}
Q^{-1}\left(\omega; \nu\right) = \frac{ \bei{\nu+2}{\sqrt{\omega}} \ber{\nu}{\sqrt{\omega}} - \bei{\nu}{\sqrt{\omega}} \ber{\nu+2}{\sqrt{\omega}} }{ \bei{\nu}{\sqrt{\omega}} \bei{\nu+2}{\sqrt{\omega}} + \ber{\nu}{\sqrt{\omega}} \ber{\nu+2}{\sqrt{\omega}} } \, .
\ee
\end{thm}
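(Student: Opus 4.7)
The plan is to substitute the inversion formulas \eqref{eq:trasf-1} and \eqref{eq:trasf-2} for $f_\alpha(\omega)$ and $g_\alpha(\omega)$, taken at $\alpha=\nu$ and $\alpha=\nu+2$, into the numerator $N := f_\nu f_{\nu+2} + g_\nu g_{\nu+2}$ and denominator $D := g_\nu f_{\nu+2} - f_\nu g_{\nu+2}$ appearing in \eqref{eq:q-bessel-1}, and then to simplify. After expanding the products, both $N$ and $D$ become linear combinations of the four bilinears $\ber{\nu}{\sqrt{\omega}}\ber{\nu+2}{\sqrt{\omega}}$, $\bei{\nu}{\sqrt{\omega}}\bei{\nu+2}{\sqrt{\omega}}$, $\ber{\nu}{\sqrt{\omega}}\bei{\nu+2}{\sqrt{\omega}}$ and $\bei{\nu}{\sqrt{\omega}}\ber{\nu+2}{\sqrt{\omega}}$, with coefficients given by products of $\cos(3\pi\nu/4)$, $\cos(3\pi(\nu+2)/4)$, $\sin(3\pi\nu/4)$ and $\sin(3\pi(\nu+2)/4)$, multiplied by an overall factor $(2/\sqrt{\omega})^{2\nu+2}$.

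The essential simplification is then purely trigonometric: writing $c_\alpha := \cos(3\pi\alpha/4)$ and $s_\alpha := \sin(3\pi\alpha/4)$, the regrouping of the expanded products yields exactly the angle-difference combinations $c_\nu c_{\nu+2} + s_\nu s_{\nu+2} = \cos(3\pi/2) = 0$ and $c_\nu s_{\nu+2} - s_\nu c_{\nu+2} = \sin(3\pi/2) = -1$. The first identity kills the symmetric bilinear pair in $N$ (respectively, the antisymmetric pair in $D$), and the second gives a clean $\pm 1$ weight to the surviving pair. The result is that only two of the four bilinears survive in each of $N$ and $D$: the antisymmetric combination $\ber{\nu}{\sqrt{\omega}}\bei{\nu+2}{\sqrt{\omega}} - \bei{\nu}{\sqrt{\omega}}\ber{\nu+2}{\sqrt{\omega}}$ in $N$, and the symmetric combination $\ber{\nu}{\sqrt{\omega}}\ber{\nu+2}{\sqrt{\omega}} + \bei{\nu}{\sqrt{\omega}}\bei{\nu+2}{\sqrt{\omega}}$ in $D$. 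The common prefactor $(2/\sqrt{\omega})^{2\nu+2}$ cancels in the ratio $N/D$, leaving \eqref{eq:q-bessel-2} after a final sign check.

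There is no conceptual obstacle; the only real danger is a sign error, since the factor $\sin(3\pi/2) = -1$ appears twice (once in $N$, once in $D$) and must combine correctly with the overall sign produced by the $4/(\imath\omega)$ prefactor of Proposition~\ref{prop-2}. As a sanity check one may repackage \eqref{eq:trasf-1} and \eqref{eq:trasf-2} into the compact complex identity $f_\alpha(\omega) + \imath\, g_\alpha(\omega) = (2/\sqrt{\omega})^{\alpha}\, e^{-\imath 3\pi\alpha/4} \bigl[\ber{\alpha}{\sqrt{\omega}} + \imath\,\bei{\alpha}{\sqrt{\omega}}\bigr]$; feeding this into Proposition~\ref{prop-2} and using $e^{\imath 3\pi/2} = -\imath$ together with $(4/(\imath\omega)) \cdot (\omega/4)=-\imath$ yields at once
\[
s\,\J(s;\nu)\big|_{s=\imath\omega} = -\,\frac{\ber{\nu}{\sqrt{\omega}} + \imath\,\bei{\nu}{\sqrt{\omega}}}{\ber{\nu+2}{\sqrt{\omega}} + \imath\,\bei{\nu+2}{\sqrt{\omega}}}.
\]
Rationalising by the conjugate of the denominator and applying \eqref{eq:Q-general} then recovers \eqref{eq:q-bessel-2} immediately, confirming that the signs conspire as claimed.
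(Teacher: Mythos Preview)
Your main argument---substituting \eqref{eq:trasf-1}--\eqref{eq:trasf-2} into the numerator and denominator of \eqref{eq:q-bessel-1}, expanding into the four Kelvin bilinears, and simplifying via the angle-difference identities $c_\nu c_{\nu+2}+s_\nu s_{\nu+2}=\cos(3\pi/2)=0$ and $c_\nu s_{\nu+2}-s_\nu c_{\nu+2}=\sin(3\pi/2)=-1$---is exactly the route taken in the paper's Appendix~\ref{appendix-proof-thm2}; you have simply made the trigonometric mechanism more explicit than the paper does.

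Your sanity check, however, is a genuinely slicker alternative that the paper does not exploit: packaging \eqref{eq:trasf-1}--\eqref{eq:trasf-2} as the single complex identity $f_\alpha+\imath g_\alpha=(2/\sqrt{\omega})^\alpha e^{-\imath 3\pi\alpha/4}\bigl[\ber{\alpha}{\sqrt{\omega}}+\imath\,\bei{\alpha}{\sqrt{\omega}}\bigr]$ and feeding it directly into Proposition~\ref{prop-2} collapses the entire computation to a phase factor $e^{\imath 3\pi/2}=-\imath$, bypassing the four-term expansion entirely. This is shorter and less error-prone than the paper's brute-force expansion of $(i)$--$(iv)$, at the minor cost of hiding the individual contributions of the four bilinears; either argument would serve as a complete proof.
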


\begin{proof}
See Appendix~\ref{appendix-proof-thm2}.
\end{proof}

\begin{rmk}
Expressing $Q^{-1}\left(\omega; \nu\right)$ in terms of Kelvin functions turns out to be particularly useful for its numerical evaluation since these functions are already implemented in most of scientific softwares of common use.
\end{rmk}

\subsection{$Q$-factor for the fractional Maxwell model and Bessel media}
Let $\tau$ be a time scale, then the constitutive equation of the fractional Maxwell model of linear viscoelasticity reads \cite{Mainardi2010book}
\be
\sigma (t) + \tau^\beta \, {^{\rm C} \! D}^\beta \sigma (t) 
= J_{\rm g} ^{-1} \tau^\beta \, {^{\rm C} \! D}^\beta \epsilon (t)
\, , \quad 0 < \beta < 1 \, ,
\ee
where ${^{\rm C} \! D}^\beta$ denotes the Caputo derivative of order $\beta$ with respect to $t$. Note that the case $\beta = 1$ corresponds to the (ordinary) Maxwell model \cite{Mainardi2010book}. Following a procedure akin to the one discussed in the present section one can easily derive the specific dissipation function for this model (see \cite{Mainardi2010book} for details) which yields
\be
Q^{-1}_{\beta}(\omega) = \frac{\sin(\pi\beta/2)}{(\omega\tau)^\beta + \cos(\pi\beta/2)} \, ,
\ee 
again, with $0 < \beta < 1$. 

As shown in \cite{colombaro2017bessel}, the Bessel models approach the behaviour of a fractional Maxwell
model of order $1/2$ for short times ($t \to 0^+$) and of an ordinary Maxwell model for long times ($t \to +\infty$). More precisely, we have the following results for the creep compliance of the Bessel models.

\begin{lemma}[see \cite{colombaro2017bessel}] \label{lemmaccio}
Consider the creep compliance for the Bessel models in the Laplace domain, {\em i.e.} Eq.~\eqref{eq:Bessel-sJ(s)}. Then one finds
\be
s \J (s;\nu) \sim
\left\{
\begin{aligned}\label{eq:Qasympt}
& 1 + 2(\nu + 1) s^{-1/2} \, , \qquad \qquad \quad \mbox{as} \,\, s \to \infty \, , \\
& \frac{2 (\nu+2)}{\nu+3} + \frac{4 (\nu+1) (\nu+2)}{s} \, , \quad \mbox{as} \,\, s \to 0 \, ,
\end{aligned}
\right.
\ee 
with $\nu > -1$.
\end{lemma}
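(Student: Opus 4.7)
The plan is to treat the two regimes separately, using the two equivalent representations of $s\widetilde{J}(s;\nu)$ already established: Eq.~\eqref{eq:Bessel-sJ(s)} is convenient for $s\to\infty$, while for $s\to 0$ it is better to work with the Tricomi form $s\widetilde{J}(s;\nu) = (4/s)\, I^{\rm T}_{\nu}(\sqrt{s})/I^{\rm T}_{\nu+2}(\sqrt{s})$ from Proposition~\ref{proposition-1}, since by Lemma~\ref{lemma-1} the numerator and denominator are entire functions of $s$ and hence admit genuine Taylor expansions around $0$.

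For the $s\to\infty$ asymptotics, I would invoke the classical Hankel-type expansion
\[
I_\alpha(z) \sim \frac{e^{z}}{\sqrt{2\pi z}}\left[1 - \frac{4\alpha^2-1}{8 z} + O(z^{-2})\right] \quad (|z|\to\infty,\ |\arg z|<\pi/2),
\]
with $z=\sqrt{s}$ and $\alpha=\nu+1,\nu+2$. The exponential prefactors $e^{\sqrt s}/\sqrt{2\pi\sqrt s}$ cancel in the ratio $I_{\nu+1}(\sqrt s)/I_{\nu+2}(\sqrt s)$, and expanding the remaining correction gives a ratio of the form $1 + c_1\, s^{-1/2} + O(s^{-1})$. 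Consequently $2(\nu+1)s^{-1/2}$ times this ratio equals $2(\nu+1)s^{-1/2} + O(s^{-1})$, and inserting into \eqref{eq:Bessel-sJ(s)} yields the first line of \eqref{eq:Qasympt}. The only subtlety is verifying that the subleading term in the Bessel expansion produces an $O(s^{-1})$ (not $O(s^{-1/2})$) correction, so that the coefficient $2(\nu+1)$ is sharp.

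For the $s\to 0$ asymptotics I would Taylor-expand the Tricomi series \eqref{eq:tricomif} directly,
\[
I^{\rm T}_\alpha(\sqrt s) = \frac{1}{\Gamma(\alpha+1)} + \frac{s}{4\,\Gamma(\alpha+2)} + O(s^2),
\]
at $\alpha=\nu$ and $\alpha=\nu+2$, factor out $\Gamma(\nu+3)/\Gamma(\nu+1)=(\nu+1)(\nu+2)$ from the ratio, and use $1/(\nu+1)-1/(\nu+3)=2/[(\nu+1)(\nu+3)]$ to collect the first-order term. This produces
\[
\frac{I^{\rm T}_{\nu}(\sqrt s)}{I^{\rm T}_{\nu+2}(\sqrt s)} = (\nu+1)(\nu+2) + \frac{(\nu+2)\, s}{2(\nu+3)} + O(s^2),
\]
and multiplication by $4/s$ reproduces the second line of \eqref{eq:Qasympt}.

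The main obstacle is not conceptual but bookkeeping: for the $s\to 0$ expansion one must carry two orders in the Tricomi series and manipulate the ratios $\Gamma(\nu+k+1)/\Gamma(\nu+k)=\nu+k$ carefully in order to obtain the finite constant $2(\nu+2)/(\nu+3)$ correctly, while the $s\to\infty$ branch requires choosing the right order of truncation in the Bessel asymptotics so that the $O(s^{-1/2})$ coefficient is sharp. Neither regime requires any information beyond the power-series and asymptotic definitions of $I_\alpha$, so the argument is self-contained.
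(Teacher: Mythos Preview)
Your argument is correct in both regimes: the Hankel expansion with $z=\sqrt{s}$ gives $I_{\nu+1}(\sqrt{s})/I_{\nu+2}(\sqrt{s}) = 1 + O(s^{-1/2})$, so the $2(\nu+1)s^{-1/2}$ coefficient is indeed sharp; and your two-term Tricomi expansion, together with $\Gamma(\nu+3)/\Gamma(\nu+1)=(\nu+1)(\nu+2)$ and the identity $1/(\nu+1)-1/(\nu+3)=2/[(\nu+1)(\nu+3)]$, reproduces exactly $4(\nu+1)(\nu+2)/s + 2(\nu+2)/(\nu+3)$.

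As for comparison: the paper does not actually prove this lemma. It is stated with a reference to \cite{colombaro2017bessel} and used as input for Proposition~\ref{prop-asymp}. Your write-up therefore supplies a self-contained derivation that the paper omits; the use of the Tricomi form from Proposition~\ref{proposition-1} for the $s\to 0$ branch is a clean choice, since it sidesteps any spurious branch-point issues and reduces the computation to a ratio of ordinary Taylor series in $s$.
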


Then, one can easily infer the following proposition.
\begin{prop}\label{prop-asymp}
Let $\nu > -1$. The asymptotic behaviour of the $Q$-factor of the Bessel models is given by
\be
Q^{-1}\left(\omega; \nu\right) \sim
\left\{
\begin{aligned}\label{eq:Qasympt2}
& \frac{\sqrt{2}(\nu +1)}{\omega^{1/2} + \sqrt{2}(\nu+1)} \, , \quad \quad \mbox{as} \,\, \omega \to +\infty \, , \\
& \frac{2 (\nu+1) (\nu+3)}{\omega} \, , \qquad \quad \mbox{as} \,\, \omega \to 0^+ \, .
\end{aligned}
\right.
\ee 
\end{prop}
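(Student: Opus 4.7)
The plan is to read off each asymptotic regime of $Q^{-1}(\omega;\nu)$ directly from the asymptotic expansions of $s\widetilde{J}(s;\nu)$ supplied by Lemma~\ref{lemmaccio}, evaluated along the imaginary axis $s=\imath\omega$. The legitimacy of this evaluation is guaranteed by Proposition~\ref{proposition-1}, which ensures that $s\widetilde{J}(s;\nu)$ is single-valued in a neighbourhood of the positive imaginary axis, so that no branch-cut subtleties arise when we set $s=\imath\omega$ with $\omega>0$.

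For the high-frequency regime $\omega\to+\infty$, I would take the first line of \eqref{eq:Qasympt}, namely $s\widetilde{J}(s;\nu)\sim 1+2(\nu+1)s^{-1/2}$, and substitute $s=\imath\omega$ using the principal branch, so that $(\imath\omega)^{-1/2}=\omega^{-1/2}(1-\imath)/\sqrt{2}$. Separating real and imaginary parts yields $\Re\{s\widetilde{J}\}\sim 1+\sqrt{2}(\nu+1)\omega^{-1/2}$ and $\Im\{s\widetilde{J}\}\sim -\sqrt{2}(\nu+1)\omega^{-1/2}$, and plugging these into \eqref{eq:Q-general} gives the stated $\sqrt{2}(\nu+1)/(\omega^{1/2}+\sqrt{2}(\nu+1))$ after cancelling the common factor $\omega^{-1/2}$ in numerator and denominator.

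For the low-frequency regime $\omega\to 0^+$, I would use the second line of \eqref{eq:Qasympt}, i.e.\ $s\widetilde{J}(s;\nu)\sim 2(\nu+2)/(\nu+3)+4(\nu+1)(\nu+2)/s$, and again set $s=\imath\omega$, using $1/(\imath\omega)=-\imath/\omega$. This produces $\Re\{s\widetilde{J}\}\sim 2(\nu+2)/(\nu+3)$ and $\Im\{s\widetilde{J}\}\sim -4(\nu+1)(\nu+2)/\omega$. Formula \eqref{eq:Q-general} then gives
\[
Q^{-1}(\omega;\nu)\sim \frac{4(\nu+1)(\nu+2)/\omega}{2(\nu+2)/(\nu+3)}=\frac{2(\nu+1)(\nu+3)}{\omega},
\]
as claimed.

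The argument is mostly bookkeeping; the only delicate point is confirming that one is allowed to replace $s\widetilde{J}(s;\nu)$ by its asymptotic expansion inside $Q^{-1}$ without losing the correct leading order. Since the expansion in Lemma~\ref{lemmaccio} is additive and its subleading corrections vanish faster than the terms retained, both the real and imaginary parts along $s=\imath\omega$ inherit the same leading behaviour, and the ratio defining $Q^{-1}$ is continuous in these quantities away from zeros of the denominator. Checking that the denominator is bounded away from zero in each regime (it tends to $1$ as $\omega\to\infty$ and to $2(\nu+2)/(\nu+3)>0$ as $\omega\to 0^+$, using $\nu>-1$) closes the argument.
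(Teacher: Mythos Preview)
Your proposal is correct and follows exactly the same approach as the paper: substitute $s=\imath\omega$ into the asymptotic expansions of Lemma~\ref{lemmaccio}, take the principal branch of $\sqrt{s}$ for the $s\to\infty$ case, and read off the real and imaginary parts to form the ratio in Eq.~\eqref{eq:Q-general}. The paper's own proof is much more terse (essentially just the instruction to do this substitution and the remark about the branch), whereas you have spelled out the intermediate computations and added a justification for why the leading-order substitution is legitimate; but the method is the same.
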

\begin{proof}
Set $s=\imath\omega$, with $\omega \in \R$ and $\omega>0$, in the results from Lemma~\ref{lemmaccio}. Note that the asymptotic expansion for $s \to \infty$ in Lemma~\ref{lemmaccio} looks multivalued (although the full function is single-valued), hence to perform the analysis one can simply choose the principal branch of $\sqrt{s}$.
\end{proof}

This result clearly shows that the high-frequency limit of $Q^{-1}\left(\omega; \nu\right)$ behaves as a fractional Maxwell model of order $1/2$ whereas, in a similar fashion, the low-frequency behaviour of $Q^{-1}\left(\omega; \nu\right)$ approaches the one of a standard Maxwell body.


\section{Numerical Results} \label{sec:numerical}

We shall now provide some numerical results and plots to elucidate the behaviour of $Q$-factor for the Bessel models governed by the analytical expression in Eq.~\eqref{eq:q-bessel-2}. Specifically, we will provide the numerical evaluation of Eq.~\eqref{eq:q-bessel-2}, against the frequency $\omega$, as for different values of the parameter $\nu>-1$.  
\begin{figure}[h!]
\centering
 \includegraphics[scale=0.8]{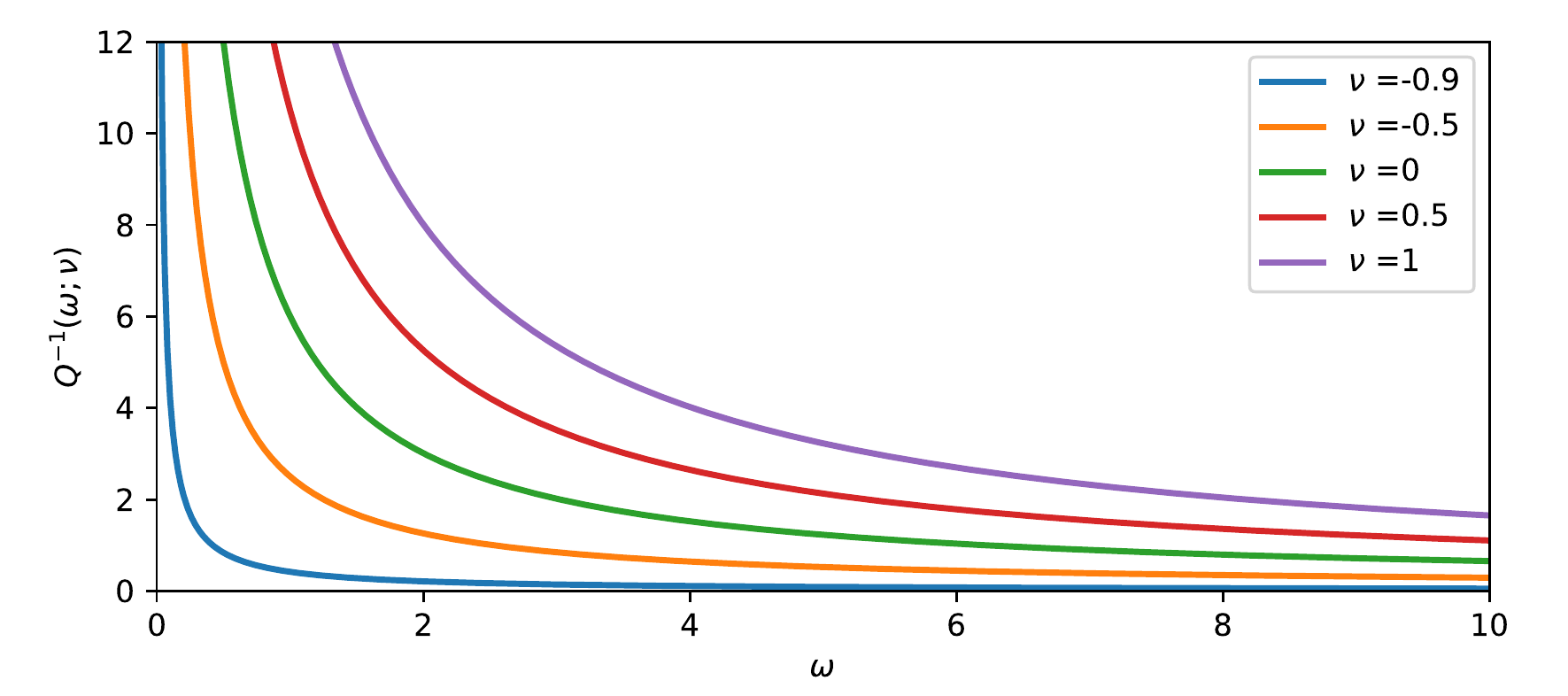} 
\caption{$Q^{-1}(\omega ; \nu)$ for different values of $\nu$, in linear scale.}
\label{fig:Q1linlin}
\end{figure}

\begin{figure}[h!]
\centering
 \includegraphics[scale=0.8]{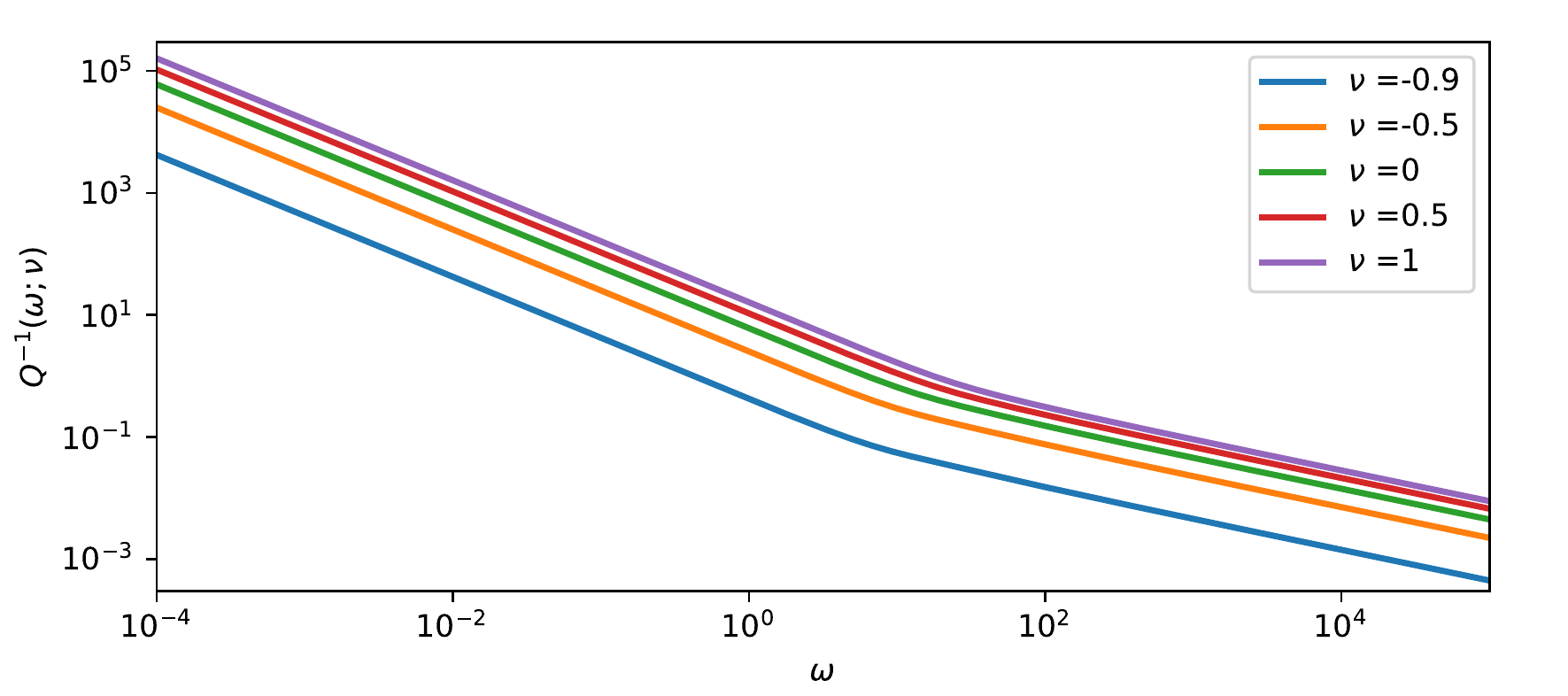} 
\caption{$Q^{-1}(\omega ; \nu)$ for different values of $\nu$, in logarithmic scale.}
\label{fig:Q1loglog}
\end{figure} 
In~\figurename~\ref{fig:Q1linlin} and~\figurename~\ref{fig:Q1loglog} we provide the plots of the quality factors, for different values of $\nu>-1$, in both linear and in logarithmic scales. From \figurename~\ref{fig:Q1linlin} one can appreciate that $Q^{-1}(\omega; \nu)$ is overall a decreasing function, with a steep behaviour at low frequencies and a softer one at high frequencies. The plot in logarithmic scale, \figurename~\ref{fig:Q1loglog}, shows the behaviour of the quality factor for an interval of values of the frequency larger than that of \figurename~\ref{fig:Q1linlin}, ranging from $10^{-4}$ to $10^5$. From \figurename~\ref{fig:Q1loglog} one can immediately identify two regions where $Q^{-1}(\omega; \nu)$ presents nearly constant slopes (in Log-Log scale), while the transition from one region to the other appears to be sharper for lower values of the parameter $\nu$.

\begin{figure}[h!]
\centering
\includegraphics[scale=0.8]{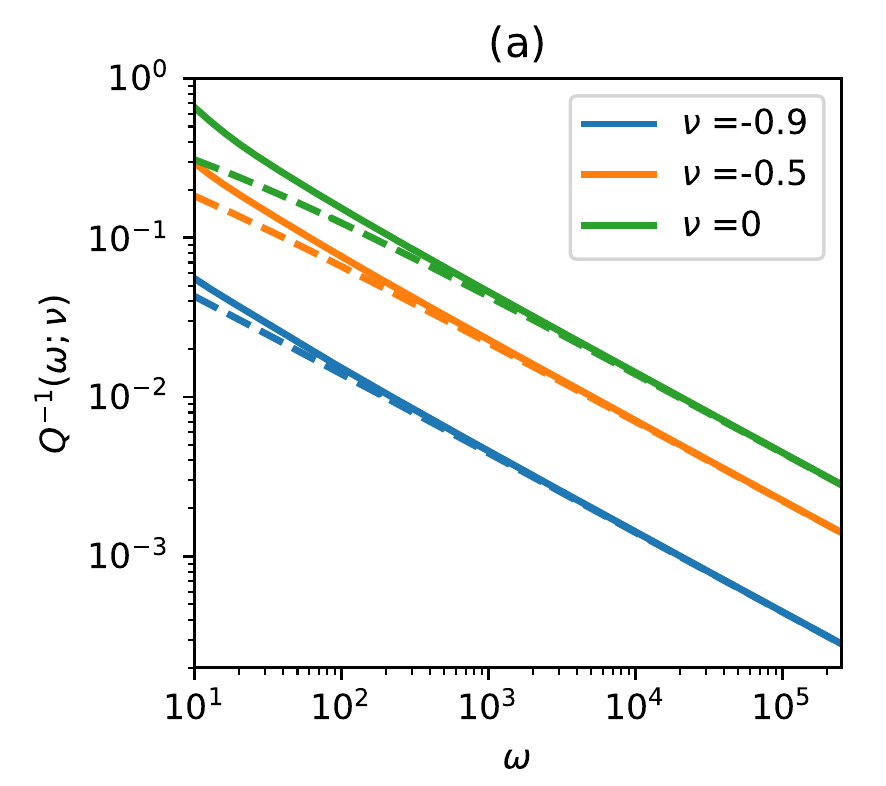}%
\includegraphics[scale=0.8]{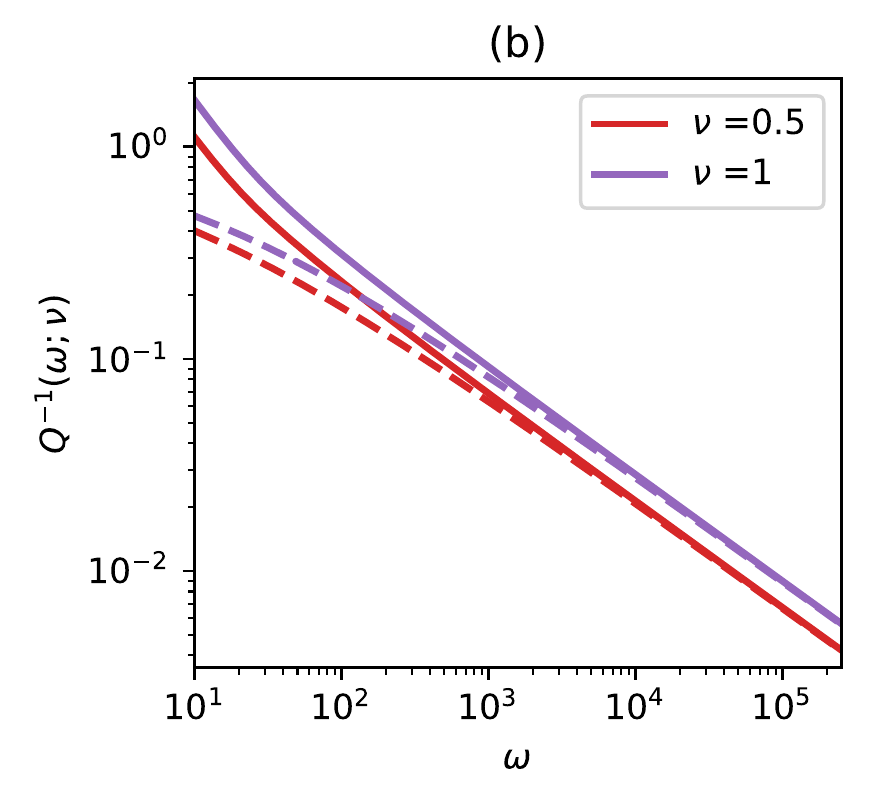} 
\caption{Comparison between $Q^{-1}(\omega ; \nu)$ (continuous line) and its asymptotic behavior (dashed line) for $\omega  \to \infty$, in logarithmic scale.}
\label{fig:Q1asinfty}
\end{figure}
\begin{figure}[h!]
\centering
\includegraphics[scale=0.8]{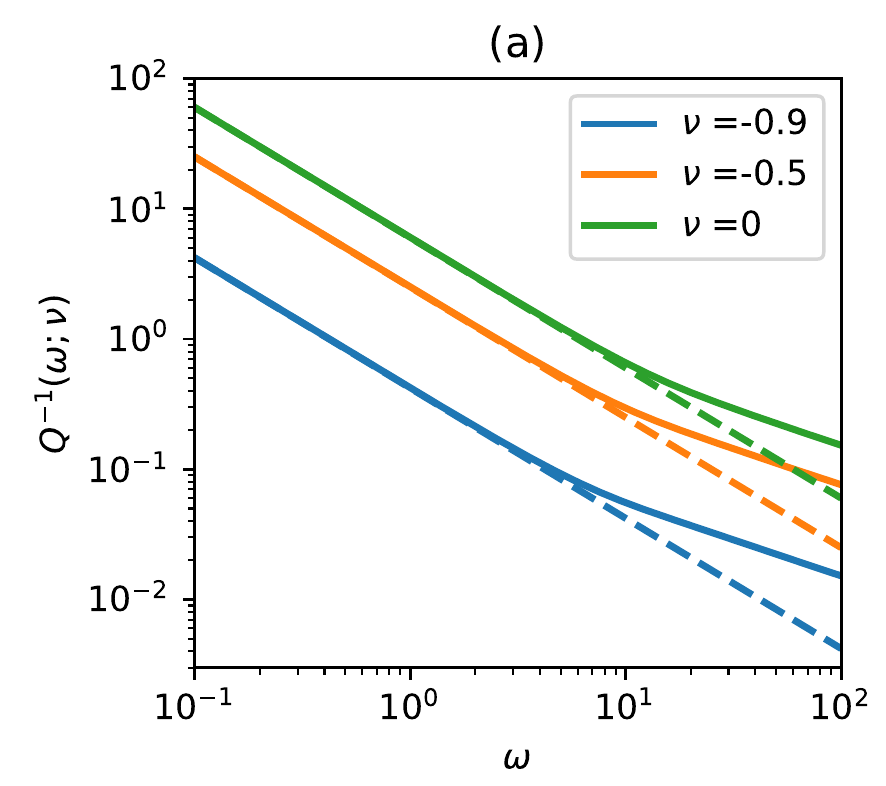}%
\includegraphics[scale=0.8]{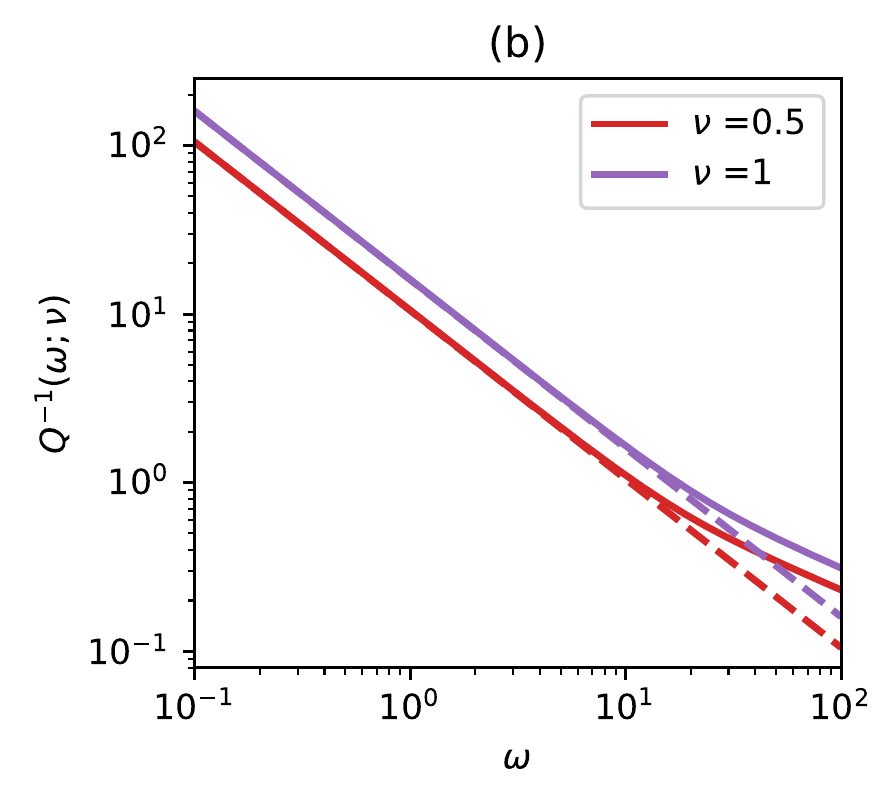} 
\caption{Comparison between $Q^{-1}(\omega ; \nu)$ (continuous line) and its asymptotic behavior (dashed line) for $\omega \to 0$, in logarithmic scale.}
\label{fig:Q1aszero}
\end{figure}

In \figurename~\ref{fig:Q1asinfty} we show the numerical matching between the analytic expression of the full $Q^{-1}(\omega ; \nu)$, Eq.~\eqref{eq:q-bessel-2}, and its estimated asymptotic behaviour at high frequencies ($\omega  \to \infty$) provided in Eq.~\eqref{eq:Qasympt2}$_1$. Similarly, in~\figurename~\ref{fig:Q1aszero} we show matching between Eq.~\eqref{eq:q-bessel-2} and its low-frequency asymptotic expansion provided in Eq.~\eqref{eq:Qasympt2}$_2$. These plots further highlight the fact that at short times (high frequencies) $Q$-factor of the Bessel models approaches the one of a fractional Maxwell model of order $1/2$, whereas at late times (low frequencies) the model relaxes to a standard Maxwell model, in accordance with the results in \cite{colombaro2017bessel} concerning the material and memory functions.

\section{Discussion and conclusions} \label{sec:conclusion} 
The Bessel models are a class of models of linear viscoelasticity that was originally derived in the context of hemodynamics \cite{giusti2016dynamic}. The constitutive laws for these models are infinite-order ordinary differential equations \cite{giusti2017infiniteorder} leading to material functions that, in the time domain, are expressed in terms of Dirichlet series, whereas in the Laplace domain they given by suitable ratios of modified Bessel functions of contiguous order \cite{colombaro2017bessel}.  

The specific attenuation factor, or $Q$-factor, is an important quantity in viscoelasticity that provides a quantitative estimate of the dissipation of energy for sinusoidal excitations in stress or strain due to the properties of the material \cite{Mainardi2010book,borcherdt2009viscoelastic,mainardi2019genbecker}.

In this work we have provided the full analytic derivation of the $Q$-factor for the Bessel models. Specifically, Theorem~\ref{thm-2} provides a precise expression for the $Q$-factor of these models in terms of a rate of Kelvin functions. Furthermore, in Proposition~\ref{prop-asymp} we provided a precise characterisation of the asymptotic behaviour of the $Q$-factor at both low frequencies and high frequencies. This asymptotic analysis agrees with previous findings concerning the matching between this class of models and the fractional Maxwell model of order $1/2$, at short times, and the standard Maxwell model, at long times \cite{colombaro2017bessel}. In other words, these models feature a continuous transition from a fractional-like behaviour to an ordinary one. Additionally, in Section~\ref{sec:numerical} we provided some numerical evaluations of the quantities computed in Section~\ref{sec:Qfactor} in order to elucidate on their full behaviour, that might not be apparent from the analytical results.

\appendix

\section{Proof of Theorem~\ref{thm-2}} \label{appendix-proof-thm2}
From Eqs. \eqref{eq:trasf-1} and \eqref{eq:trasf-2} one finds:
\begin{align*}
(i) \,\, f_{\nu} (\omega)f_{\nu+2} (\omega) = \left(\frac{2}{\sqrt{\omega}} \right)^{2\nu+2} 
\Biggl[
&\cos\left( \frac{3\pi}{4}\nu\right) \cos\left( \frac{3\pi}{4}(\nu+2)\right) \ber{\nu}{\sqrt{\omega}} \ber{\nu+2}{\sqrt{\omega}}
\nonumber \\
+& \cos\left( \frac{3\pi}{4}\nu\right) \sin\left( \frac{3\pi}{4}(\nu+2)\right) \ber{\nu}{\sqrt{\omega}} \bei{\nu+2}{\sqrt{\omega}}
\nonumber \\
+& \sin\left( \frac{3\pi}{4}\nu\right) \cos\left( \frac{3\pi}{4}(\nu+2)\right) \bei{\nu}{\sqrt{\omega}} \ber{\nu+2}{\sqrt{\omega}}
\nonumber \\
+& \sin\left( \frac{3\pi}{4}\nu\right) \sin\left( \frac{3\pi}{4}(\nu+2)\right) \bei{\nu}{\sqrt{\omega}} \bei{\nu+2}{\sqrt{\omega}}
\Biggr] \, ;
\end{align*}

\begin{align*} 
(ii) \,\, g_{\nu} (\omega)g_{\nu+2} (\omega) = \left(\frac{2}{\sqrt{\omega}} \right)^{2\nu+2} 
\Biggl[
&\sin\left( \frac{3\pi}{4}\nu\right) \sin\left( \frac{3\pi}{4}(\nu+2)\right) \ber{\nu}{\sqrt{\omega}} \ber{\nu+2}{\sqrt{\omega}}
\nonumber \\
-& \sin\left( \frac{3\pi}{4}\nu\right) \cos\left( \frac{3\pi}{4}(\nu+2)\right) \ber{\nu}{\sqrt{\omega}} \bei{\nu+2}{\sqrt{\omega}}
\nonumber \\
-& \cos\left( \frac{3\pi}{4}\nu\right) \sin\left( \frac{3\pi}{4}(\nu+2)\right) \bei{\nu}{\sqrt{\omega}} \ber{\nu+2}{\sqrt{\omega}}
\nonumber \\
+& \cos\left( \frac{3\pi}{4}\nu\right) \cos\left( \frac{3\pi}{4}(\nu+2)\right) \bei{\nu}{\sqrt{\omega}} \bei{\nu+2}{\sqrt{\omega}}
\Biggr] \, ;
\end{align*} 

\begin{align*}
(iii) \,\, g_{\nu} (\omega)f_{\nu+2} (\omega) = \left(\frac{2}{\sqrt{\omega}} \right)^{2\nu+2} 
\Biggl[
&-\sin\left( \frac{3\pi}{4}\nu\right) \cos\left( \frac{3\pi}{4}(\nu+2)\right) \ber{\nu}{\sqrt{\omega}} \ber{\nu+2}{\sqrt{\omega}}
\nonumber \\
-& \sin\left( \frac{3\pi}{4}\nu\right) \sin\left( \frac{3\pi}{4}(\nu+2)\right) \ber{\nu}{\sqrt{\omega}} \bei{\nu+2}{\sqrt{\omega}}
\nonumber \\
+& \cos\left( \frac{3\pi}{4}\nu\right) \cos\left( \frac{3\pi}{4}(\nu+2)\right) \bei{\nu}{\sqrt{\omega}} \ber{\nu+2}{\sqrt{\omega}}
\nonumber \\
+& \cos\left( \frac{3\pi}{4}\nu\right) \sin\left( \frac{3\pi}{4}(\nu+2)\right) \bei{\nu}{\sqrt{\omega}} \bei{\nu+2}{\sqrt{\omega}}
\Biggr] \, ;
\end{align*}

\begin{align*}
(iv) \,\, f_{\nu} (\omega)g_{\nu+2} (\omega) = \left(\frac{2}{\sqrt{\omega}} \right)^{2\nu+2} 
\Biggl[
-&\cos\left( \frac{3\pi}{4}\nu\right) \sin\left( \frac{3\pi}{4}(\nu+2)\right) \ber{\nu}{\sqrt{\omega}} \ber{\nu+2}{\sqrt{\omega}}
\nonumber \\
+& \cos\left( \frac{3\pi}{4}\nu\right) \cos\left( \frac{3\pi}{4}(\nu+2)\right) \ber{\nu}{\sqrt{\omega}} \bei{\nu+2}{\sqrt{\omega}}
\nonumber \\
-& \sin\left( \frac{3\pi}{4}\nu\right) \sin\left( \frac{3\pi}{4}(\nu+2)\right) \bei{\nu}{\sqrt{\omega}} \ber{\nu+2}{\sqrt{\omega}}
\nonumber \\
+& \sin\left( \frac{3\pi}{4}\nu\right) \cos\left( \frac{3\pi}{4}(\nu+2)\right) \bei{\nu}{\sqrt{\omega}} \bei{\nu+2}{\sqrt{\omega}}
\Biggr] \, .
\end{align*}

\noindent Then, plugging (i)-(iv) into Eq.~\eqref{eq:q-bessel-1} in Theorem~\ref{thm-1} one immediately infers Eq.~\eqref{eq:q-bessel-2}.
%
%
%

%
%
%

\begin{thebibliography}{99}

\bibitem{pipkin2012lectures}
A.~C. Pipkin, {\em {Lectures on viscoelasticity theory}}, vol.~7.
\newblock Springer Science \& Business Media, 2012.

\bibitem{RogosinMainardi2014}
{Rogosin, S. and Mainardi F.}, ``{George William Scott Blair -- the pioneer of
  factional calculus in rheology},'' {\em Commun. Appl. Ind. Math.} {\bfseries
  6} no.~1, (2014) --e681.

\bibitem{mainardi2011creep}
F.~Mainardi and G.~Spada, ``{Creep, relaxation and viscosity properties for
  basic fractional models in rheology},'' {\em Eur.~Phys.~J.~Special Topics}
  {\bfseries 193} (2011) 133--160.

\bibitem{Mainardi2010book}
F.~Mainardi, {\em {Fractional calculus and waves in linear viscoelasticity: an
  introduction to mathematical models}}.
\newblock World Scientific, 2nd~ed., 2022.

\bibitem{giusti2020practicalguide}
A.~Giusti, I.~Colombaro, R.~Garra, R.~Garrappa, F.~Polito, M.~Popolizio, and
  F.~Mainardi, ``{A practical guide to Prabhakar fractional calculus},'' {\em
  Fract. Calc. Appl. Anal.} {\bfseries 23} no.~1, (2020) 9--54.

\bibitem{MainardiGorenflo1997FC}
R.~Gorenflo and F.~Mainardi, ``{Fractional Calculus: Integral and Differential
  Equations of Fractional Order},'' in {\em Fractals and Fractional Calculus in
  Continuum Mechanics}, A.~Carpinteri and F.~Mainardi, eds., pp.~223--276.
\newblock Springer Verlag, Wien, 1997.

\bibitem{Samko1993fractint}
S.~G. Samko, A.~A. Kilbas, and O.~I. Marichev, {\em {Fractional integrals and
  derivatives: theory and applications}}.
\newblock Taylor and Francis, 1993.

\bibitem{diethelm2020why}
K.~Diethelm, R.~Garrappa, A.~Giusti, and M.~Stynes, ``{Why Fractional
  Derivatives with Nonsingular Kernels Should Not Be Used},'' {\em Fract. Calc.
  Appl. Anal.} {\bfseries 23} (2020) 610--634.

\bibitem{colombaro2017bessel}
I.~Colombaro, A.~Giusti, and F.~Mainardi, ``{A class of linear viscoelastic
  models based on Bessel functions},'' {\em Meccanica} {\bfseries 52} no.~4-5,
  (2017) 825--832.

\bibitem{borcherdt2009viscoelastic}
R.~D. Borcherdt, {\em {Viscoelastic waves in layered media}}.
\newblock Cambridge University Press, 2009.

\bibitem{colombaro2018storage}
I.~Colombaro, A.~Giusti, and S.~Vitali, ``{Storage and dissipation of energy in
  Prabhakar viscoelasticity},'' {\em Mathematics} {\bfseries 6} (2018) 15.

\bibitem{Abramowitz1965handbook}
M.~Abramowitz and I.~A. Stegun, {\em {Handbook of Mathematical Functions}}.
\newblock Dover, New York, 1965.

\bibitem{giusti2016dynamic}
A.~Giusti and F.~Mainardi, ``{A dynamic viscoelastic analogy for fluid-filled
  elastic tubes},'' {\em Meccanica} {\bfseries 51} (2016) 2321.

\bibitem{giusti2016infiniteseries}
A.~Giusti and F.~Mainardi, ``{On infinite series concerning zeros of Bessel
  functions of the first kind},'' {\em Eur. Phys. J. Plus} {\bfseries 131}
  (2016) 206.

\bibitem{giusti2017infiniteorder}
A.~Giusti, ``{On infinite order differential operators in fractional
  viscoelasticity},'' {\em Fract. Calc. Appl. Anal.} {\bfseries 20} no.~4,
  (2017) 854.

\bibitem{buchen1975asymptotic}
P.~W. Buchen and F.~Mainardi, ``{Asymptotic expansions for transient
  viscoelastic waves},'' {\em J. de Mec.} {\bfseries 14} no.~4, (1975)
  597--608.

\bibitem{colombaro2017transientwaves}
I.~Colombaro, A.~Giusti, and F.~Mainardi, ``{On transient waves in linear
  viscoelasticity},'' {\em Wave Motion} {\bfseries 74} (2017) 191--212.

\bibitem{mainardi2019genbecker}
F.~Mainardi, E.~Masina, and G.~Spada, ``{A generalization of the Becker model
  in linear viscoelasticity: Creep, relaxation and internal friction},'' {\em
  Mech. Time-Depend. Mater.} {\bfseries 23} (2019) 283.

\end{thebibliography}
\end{document}